
\documentclass[%
 reprint,
superscriptaddress,
amsmath,amssymb,
aps,
floatfix,
]{revtex4-2}

\usepackage{graphicx} 
\usepackage{amsmath} 
\usepackage{amssymb}  
\usepackage{bbold}
\usepackage{color}
\usepackage{amsthm}

\DeclareMathOperator{\diag}{diag}
\newtheorem{theorem}{Theorem}
\newtheorem{lemma}[theorem]{Lemma}
\newtheorem{definition}{Definition}

\begin{document}

\preprint{APS/123-QED}

\title{Bounding first-order quantum phase transitions in adiabatic quantum computing}

\author{Matthias Werner}
\email{matthias.werner@qilimanjaro.tech}
\affiliation{Qilimanjaro Quantum Tech., Carrer dels Comtes de Bell-Lloc, 161, 08014 Barcelona, Spain}
\affiliation{Departament de Física Quàntica i Astrofísica (FQA), Universitat de Barcelona (UB), Carrer de Martí i Franqués, 1, 08028 Barcelona, Spain}

\author{Artur García-Sáez}
\affiliation{Qilimanjaro Quantum Tech., Carrer dels Comtes de Bell-Lloc, 161, 08014 Barcelona, Spain}
\affiliation{Barcelona Supercomputing Center, Plaça Eusebi Güell, 1-3, 08034 Barcelona, Spain}

\author{Marta P. Estarellas}
\affiliation{Qilimanjaro Quantum Tech., Carrer dels Comtes de Bell-Lloc, 161, 08014 Barcelona, Spain}

\thispagestyle{empty}
\pagestyle{empty}

\begin{abstract}
    In the context of adiabatic quantum computation (AQC), it has been argued that first-order quantum phase transitions (QPTs) due to localisation phenomena cause AQC to fail by exponentially decreasing the minimal spectral gap of the Hamiltonian along the annealing path as a function of the qubit number. The vanishing of the spectral gap is often linked to the localisation of the ground state in a local minimum, requiring the system to tunnel into the global minimum at a later stage of the annealing. Recent methods have been proposed to avoid this phenomenon by carefully designing the involved Hamiltonians. However, it remains a challenge to formulate a comprehensive theory of the effect of the various parameters and the conditions under which QPTs make the AQC algorithm fail. Equipped with concepts from graph theory, in this work we link graph quantities associated to the Hamiltonians along the annealing path with the occurrence of QPTs. These links allow us to derive bounds on the location of the minimal spectral gap along the annealing path, augmenting the toolbox for the analysis of strategies to improve the runtime of AQC algorithms.
\end{abstract}

\maketitle

\section{INTRODUCTION}
One of the central goals of quantum computing is the prospect of being able to efficiently solve classically hard computational problems. Adiabatic Quantum Computation (AQC), proposed by Farhi et al. \cite{Farhi_2000}, is a model of quantum computation particularly well suited to tackle optimization tasks that fall into this category. Roland et al. \cite{Roland_2002} showed that a quadratic speed-up of Grover's search algorithm \cite{Grover_1996} can be obtained not only through a gate-based quantum circuit but also by AQC. This, together with the proofs of equivalence between AQC and the gate-based model \cite{Aharonov2004}, indicate that a universal AQC device would provide quantum advantage.\\ 
In AQC, a quantum system is prepared in the ground state of a relatively simple initial Hamiltonian, also called the driver Hamiltonian. The Hamiltonian of the system is then \textit{slowly} interpolated to a target Hamiltonian whose ground state encodes the solution of the target problem. By \textit{slowly} we mean that the rate of change of the Hamiltonian adheres to the adiabatic condition as stated by the adiabatic theorem \cite{Albash2018}. As a consequence, the runtime of the algorithm is inversely related to the width of the spectral gap of the instantaneous Hamiltonian and a rapidly closing spectral gap therefore dramatically increases the runtime, making the algorithm infeasible. A major cause of these exploding runtimes are first-order quantum phase transitions (QPTs) \cite{Young_2010, Thomas_2010} due to Anderson localisation, which results in (avoided) level crossings that lead to an exponential closing of the spectral gap and, consequently, to an exponential runtime in the number of qubits in the system. Altshuler et al. \cite{Altshuler_2010} considered this to be a failure proof of AQC. However, this proposition has been contested, as methods are known to avoid the exponential closing of the gap \cite{Knysh_2010, Choi_2011, Farhi_2011, Dickson_2011a, Dickson_2011b}, suggesting that there are specific conditions when localisation phenomena can be avoided by careful design of the initial Hamiltonian.\\
In the context of AQC, first-order QPTs can occur when an initially delocalized state transitions into a localized state that is supported in a local minimum, while only having negligible amplitudes in the global minimum. As a consequence, as the annealing continues, the ground state transitions to the global minimum, resulting in a rapidly closing spectral gap as well as a discontinuity in the solution fidelity. The latter transition from the local to the global minimum constitutes a first-order QPT. In the spectrum of the interpolated Hamiltonians the first-order QPTs correspond to (avoided) level crossings of the ground and first excited state. However, ideally these transitions are avoided and the delocalized state transitions directly into the global minimum, which results in a smoother fidelity profile. The two scenarios are depicted in Figure \ref{figLocalizationLandscape}. Such a qualitative difference raises the question of what are the distinguishing properties of the local minima that make the ground state localize there first. Amin and Choi linked the occurrence of first-order QPTs to the presence of a large number of low-energy states which are connected by a small number of bit flips \cite{AminChoi_2009}. This notion, however, is rather broad.\\
In this work we push towards answering this question from a graph-theoretical perspective. By applying degenerate perturbation theory we show how particular graph theoretic quantities obtained from the initial Hamiltonian can be related to the spectral gap along the annealing process. These quantities are well understood in spectral graph theory, linking them to the spectral gap of adjacency matrices \cite{AlShimary2010, Crosson2017, Jarret2018_1, Jarret2018_2}. Importantly, these links allow us to give conditions for the occurrence or absence of first-order QPTs and derive bounds on its location, shedding some light on this particular error mechanism common in AQC algorithms with the prospect of finding strategies to mitigate it. The use of graph theory has proven to be a useful tool in the understanding of many-body systems \cite{Estarellas2020,Bastidas2018}.\\
This work is structured as follows: first we review the basics of the AQC model in section \ref{secAQC}. In the following section \ref{secTheory} we give the necessary definitions and show how degenerate perturbation theory allows for the introduction of certain graph theoretical concepts, specifically the conductance of a subset of nodes $V$ and the maximum degree of the respective induced sub-graph $G(V)$. Using these concepts, we derive bounds on the energy of states localized in local minima of the energy landscape, which further allows us to derive bounds on the location of the minimal spectral gap along the annealing path. In section \ref{secNumerics}, in order to numerically investigate the validity of the derived bounds, we then exactly solve artificially generated toy model instances, as well as an instance of an NP-complete problem and compare the observed location of the minimal spectral gap with the predictions of our bounds. We conclude this work by discussing the tightness and interpretation of the derived bounds in section \ref{secDiscussion}, as well as potential applications of our results.

\begin{figure}
    \centering
    \includegraphics[scale=0.5]{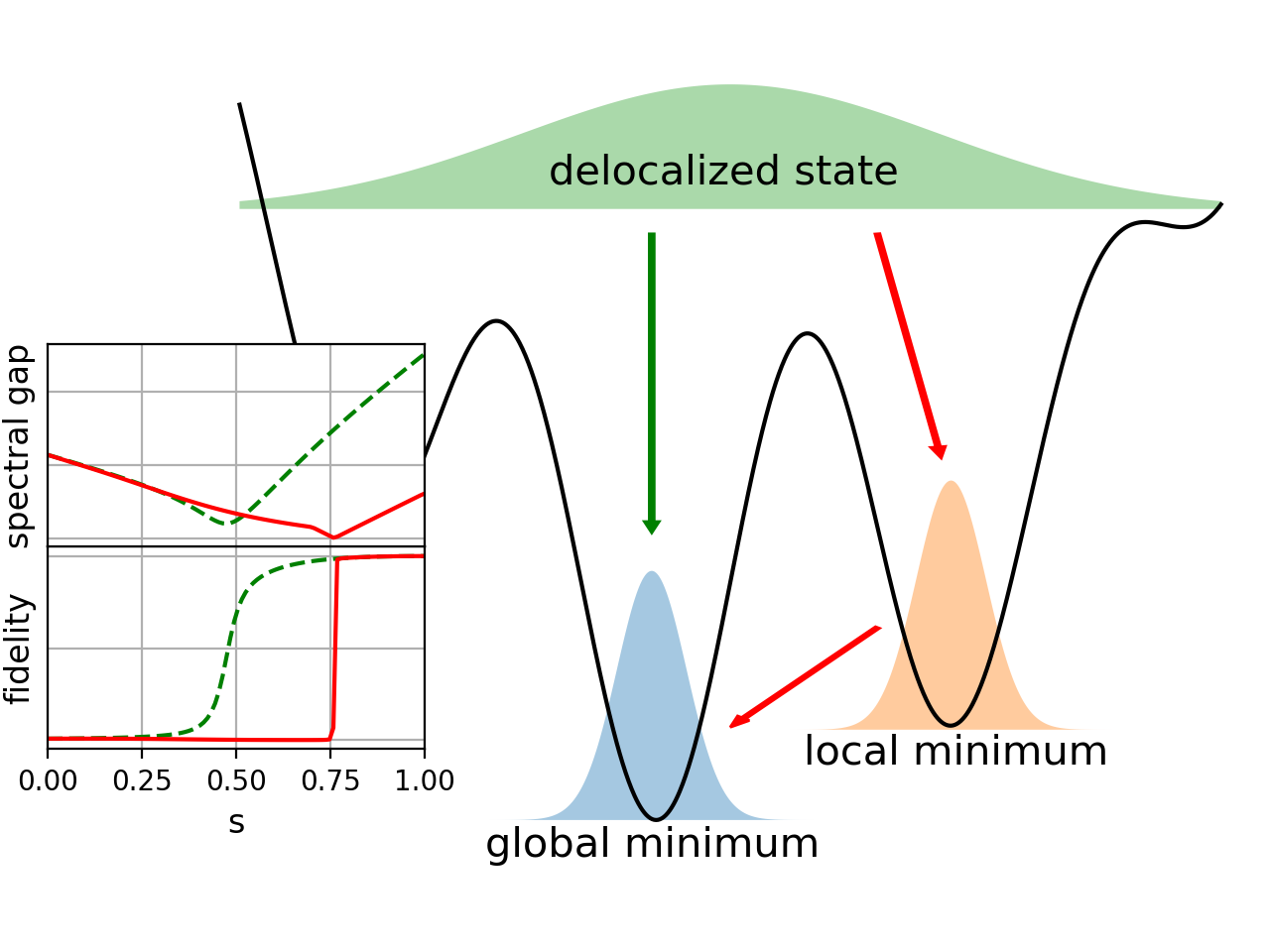}
    \caption{Ground state localisation with (red arrows and solid graph profiles) and without (green arrow and dashed graph profiles) tunneling to the global minimum; the inset shows the spectral gaps and solution fidelities over the interpolation parameter, or annealing schedule, $s$. }
    \label{figLocalizationLandscape}
\end{figure}

\section{Adiabatic Quantum Computation}
\label{secAQC}
AQC works by interpolating between a driver Hamiltonian $H_D$, also called initial Hamiltonian, and the target Hamiltonian $H_T$. The ground state of $H_D$ needs to be simple to prepare, while $H_T$ has been carefully designed such that the ground state encodes the solution of the problem at hand. In the case of optimization problems, this is often done by formulating the problem to a quadratic unconstrained binary optimization (QUBO) \cite{Lucas_2014}.\\
The full Hamiltonian as a function of the interpolation parameter $s = s(t) \in [0, 1]$, also called the schedule, is given by
\begin{equation}
    H(s) = (1-s)H_D + s H_T \ .
    \label{eqAdiabaticTransitionHamiltonian}
\end{equation}
The instantaneous eigenstates of $H(s)$ are denoted by $|\Psi_n(s)\rangle$ with respective eigenvalues $E_n = E_n(s)$ for $n = 0, ..., N-1$ such that $E_0 \leq E_1 \leq ... \leq E_{N-1}$ with $N = 2^{N_Q}$ the dimension of the Hilbert space and $N_Q$ the number of qubits. Let us consider
\begin{equation}
    |\Psi(s)\rangle = \sum_n a_n(s) |\Psi_n(s)\rangle
\end{equation} an arbitrary state of the quantum system along the anneal path. At $s=0$ we prepare the system such that $|a_0(s=0)|^2 = 1$. In order to ensure $|a_0(s=1)|^2 \approx 1$, the rate of change has to obey the adiabatic theorem \cite{Albash2018}
\begin{equation}
    \frac{|\langle \Psi_1(s) |\frac{dH}{ds}|\Psi_0(s)\rangle|}{g_{\mathrm{min}}^2}\leq \epsilon 
    \label{eqAdiabaticCondition}
\end{equation}
for $\epsilon \ll 1$ and where
\begin{equation}
    g_{\mathrm{min}} = \min_{s} \left( E_1(s) - E_0(s) \right)
\end{equation}
is the minimal spectral gap between ground and first excited state. The matrix element in the numerator of Eq. \eqref{eqAdiabaticCondition} can typically be assumed to be bounded by a polynomial of the qubit number, however in this work we normalize the Hamiltonians such that the matrix element is bounded by a constant. This serves to simplify some of the expressions and isolate the effect of the underlying graph structure of $H_D$, as discussed below. Given this normalization, the runtime of an AQC algorithm is determined by the minimal gap $g_{\mathrm{min}}$.\\
In this work we will assume the target Hamiltonians $H_T$ to be diagonal in the computational basis, i.e.
\begin{equation}
    H_T = \diag(E_0^T, E_1^T, ..., E_{N-1}^T)
    \label{eqTarget}
\end{equation}
with eigenstates $|z\rangle$ for each eigenvalue $E_z^T$. 
A common choice for $H_D$ is
\begin{equation}
    H_D = -\sum_{i=0}^{N_Q-1} \sigma_i^x
    \label{eqSigmaXDriver}
\end{equation}
where $\sigma_i^x$ is the Pauli-x operator applied on qubit $i$. Various works \cite{Hen_2016, Choi_2021} investigated the impact of $H_D$ on the spectral gap and hence on the runtime. In this work we investigate the impact of $H_D$ on the runtime as well. However, we will focus on the underlying graph of the Hamiltonian and its relation to the occurrence of first-order QPTs.

\section{Graph theory and QPT\lowercase{s}}
\label{secTheory}
\subsection{Basic definitions}
We consider driver Hamiltonians $H_D$ that can be associated with a graph $G$ in the Hilbert space. A graph $G = (\mathcal{V}, \mathcal{E})$ is defined by a set of nodes $\mathcal{V}$, as well as a set of edges
\begin{equation}
    \mathcal{E} := \{ (i,j) : i,j \in \mathcal{V} \text{ connected in } G\} \ .
\end{equation}
Figure \ref{figGridBasis} shows an example of a graph. For each graph $G$ one can define the adjacency matrix $A_G \in \{0, 1\}^{|\mathcal{V}| \times |\mathcal{V}|}$ with
\begin{equation}
    (A_G)_{ij} = \begin{cases}
            1 \text{ if } (i,j) \in \mathcal{E}, \\
            0 \text{ else \ .}
        \end{cases}
\end{equation}
As it is the case with the common driver Hamiltonian defined in Eq. \eqref{eqSigmaXDriver}, we assume the elements of $H_D$ to be negative. Then we can write more generally
\begin{equation}
    H_D = \frac{-1}{d} A_G
    \label{eqHdDefinition}
\end{equation}
with the adjacency matrix of a $d$-regular simple graph $G=(\mathcal{V}, \mathcal{E})$ where the nodes $\mathcal{V}$ are the computational basis states, denoted by $|z\rangle$, and the edges $\mathcal{E}$ are the non-zero matrix elements of $H_D$ as depicted in Figure \ref{figGridBasis}.\\
For simplicity, we will limit the analysis to $d$-regular simple graphs, as many commonly used $H_D$ such as Eq. \eqref{eqSigmaXDriver} fall into this category. The scaling by $\frac{1}{d}$ is introduced to normalize the ground state energies of the investigated $H_D$ to -1.
\begin{figure}
    \centering
    \includegraphics[scale=0.35]{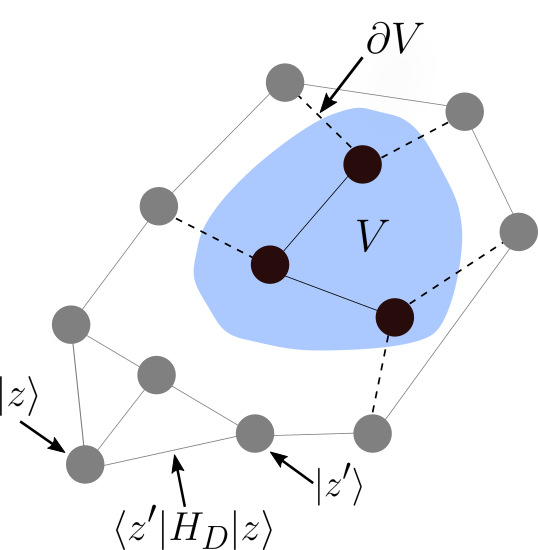}
    \caption{An example of a 3-regular Graph $G$ of the configuration space spanned by the eigenvectors $|z \rangle$ of $H_T$. The $|z \rangle$ are represented by the nodes $\mathcal{V}$, while the set of edges $\mathcal{E}$ correspond to the off-diagonals given by the matrix elements of $H_D$. The nodes in the shaded area represent the degenerate subspace $V$ of $H_T$, inducing the subgraph $G(V)$ (black nodes, solid black edges). All edges leaving $V$ (dashed black edges) constitute the edge boundary $\partial V$ of $V$.}
    \label{figGridBasis}
\end{figure}
Moreover, we will make use of the following concepts:
\begin{definition}(Induced subgraph)
Let $G = (\mathcal{V}$, $\mathcal{E})$ be a graph and $V \subseteq \mathcal{V}$. The induced subgraph $G(V) \subseteq G$ is defined as the graph
\begin{equation}
    G(V) = (V, E)
\end{equation}
with
\begin{equation}
    E = \{(i,j) \in \mathcal{E} : i,j \in V \} \ .
\end{equation}
\end{definition}

\begin{definition}(Edge boundary)
Let G = ($\mathcal{V}$, $\mathcal{E}$) a graph and $V \subseteq \mathcal{V}$. The edge boundary $\partial V \subseteq \mathcal{E}$ of $V$ is defined as
\begin{equation}
    \partial V = \{ (i, j) \in \mathcal{E}: i \in V, j \in \mathcal{V} \backslash V \} \ .
\label{eqEdgeBoundaryDefinition}
\end{equation}
\end{definition}

\begin{definition}(Conductance)
Let G = ($\mathcal{V}$, $\mathcal{E}$) a graph and $V \subseteq \mathcal{V}$. The conductance $\phi (V)$ of $V$ is defined as
\begin{equation}
    \phi(V) = \frac{|\partial V|}{|V|} \ .
\label{eqConductanceDefinition}
\end{equation}
\end{definition}

In a slight abuse of notation, we will use the symbol $V$ for both the subset of nodes in $G$, as well as the subspace of the Hilbert space spanned by (nearly) degenerate eigenstates of $H_T$.

\subsection{From degenerate perturbation theory to spectral graph theory}
\label{secTheorySpecTheo}
We will make use of degenerate perturbation theory. To this end let us define the set $V$ as the set of (nearly) degenerate eigenstates of $H_T$ with $E_z^T \approx E^T_V$, with $E^T_V$ being the energy of the local minimum $V$. Considering $sH_T$ the unperturbed Hamiltonian and $(1-s)H_D$ the perturbation, we have to diagonalize $H(s)$ on the subspace spanned by $V$. Given that $H_T$ is (nearly) degenerate in this subspace, we have to solve the eigenvalue equation
\begin{equation}
    E_{V}(s) | V \rangle = (1-s) H_D' | V \rangle + s E_V^T | V \rangle
    \label{eqSubspaceEigenvalueEquation}
\end{equation}
where $H_D'$ is the projection of $H_D$ onto the subspace $V$. Note that $|V \rangle$ by definition is an element of the subspace $V$ and hence we consider it a state localized in $V$ with energy $E_V(s)$.\\
From Eq. \eqref{eqSubspaceEigenvalueEquation} it follows directly that $|V\rangle$ has to be an eigenvector of
\begin{equation}
    H_D' = \frac{-1}{d} A_{G(V)}    
\end{equation}
with $G(V)$ the sub-graph of $G$ induced by $V$ and $A_{G(V)}$ its adjacency matrix. The eigenvector with the minimal energy is the principal eigenvector of $G(V)$ and its energy is given by
\begin{equation}
    E_V(s) = -(1-s) \frac{1}{d} \lambda_V + sE_V^T
\end{equation}
where $\lambda_V$ is the principal eigenvalue of $G(V)$. If the subgraph $G(V)$ is connected, the principal eigenvalue is unique and the perturbation with the driver Hamiltonian $H_D$ lifts the degeneracy of $H_T$. However, our analysis remains valid for disconnected $G(V)$ as well. If $G(V)$ is disconnected, one can consider each connected components in $G(V)$ separately, as in this case the set $V$ of degenerate eigenstates of $H_T$ corresponds to distinct local minima in $G$. The principal eigenvalue $\lambda_V$ is going to be the largest principal eigenvalue of the induced subgraphs of the connected components. Equivalently, one can consider as local minima only subsets $V$ such that $G(V)$ is connected.\\
It can be shown for a $d$-regular graph $G$ that
\begin{equation}
    d - \phi(V) \leq \lambda_V \leq d_{\mathrm{max}}(V)
    \label{eqPrincipalEigenvalueBounds}
\end{equation}
where $d_{\mathrm{max}}(V)$ is the maximal degree of $G(V)$ (see Appendix \ref{secAppendixDerivationOfBounds}). Note that while $G$ is $d$-regular, $G(V)$ may be irregular and $d_{\mathrm{max}}(V) \leq d$. Using these ingredients we obtain both a lower and an upper bound on $E_{V}(s)$
\begin{subequations}
\begin{align}
    E_{V}(s) &\geq -(1-s) \frac{d_{\mathrm{max}}(V)}{d} + s E_V^T \ , \\[10pt]
    E_{V}(s) &\leq (1-s) \left( \frac{\phi(V)}{d} - 1 \right) + s E_V^T \ . \label{eqEVupperbound}
\end{align}
\label{eqEVbound}%
\end{subequations}
The bounds Eq. \eqref{eqEVbound} are the first key result of this work. In the next section, we will use them to derive bounds on the location of first-order quantum phase transitions along the anneal.

\subsection{Bounding first-order quantum phase transitions}
QPTs \cite{Vojta2002, Sachdev} are also called zero-temperature phase transitions, as they are driven by the competition between quantum fluctuations and minimizing some potential. Classical phase transitions are driven instead by entropic fluctuations. At zero temperature, the entropic part of the potential goes to zero, but quantum fluctuations persist. Level crossings of the ground and first excited states in that case can be seen as first-order phase transitions according to the Ehrenfest classification, since in thermal equilibrium at zero temperature only the ground state is populated. Consequently, the thermodynamic free energy at zero temperature is equal to the ground state energy. The first derivative of the free energy with respect to the annealing schedule $s$ will be discontinuous at the level crossing, since the ground state changes rapidly and
\begin{equation}
    \begin{aligned}
    \frac{d}{d s} &\langle \psi_0 | H(s) | \psi_0 \rangle \\
    = &\langle \psi_0 | \frac{d}{d s} H(s) | \psi_0 \rangle \\
    = &\langle \psi_0 | H_T - H_D | \psi_0 \rangle
    \end{aligned}
\end{equation}
Therefore, level crossings can be considered a first-order QPT \cite{AminChoi_2009, TakadaLidar}.\\
Using the bounds  Eq. \eqref{eqEVbound}, it is possible to estimate the location of the crossing of two energy levels within first-order perturbation theory. There are two conditions to be met for a level crossing to occur. Let $E_{\mathrm{local}}(s)$ and $E_{\mathrm{global}}(s)$ the energies of states localized in the potentially degenerate local and global minima respectively, as depicted in Figure \ref{figLevelCrossingPrinciple}. First, the energies are required to cross at some value of $s^* \in[0, 1]$
\begin{equation}
E_{\mathrm{global}}(s^*) = E_{\mathrm{local}}(s^*) \ .
\label{eqLocLocCondition}
\end{equation}
However, this is not sufficient for the level crossing to lead to a first-order QPT. Consider $E_{\mathrm{deloc}}(s)$ to be the energy of a delocalized state. If at $s^*$ we find
\begin{equation}
    E_{\mathrm{deloc}}(s^*) < E_{\mathrm{local}}(s^*) = E_{\mathrm{global}}(s^*)
\end{equation}
the instantaneous ground state  would still be the delocalized state and the closing gap between the local and global minimum would not lead to a ground state transition. Hence, the second condition is that the crossing between the global and local minimum has to occur at a time $s^*$ when
\begin{equation}
    E_{\mathrm{local}}(s^*) = E_{\mathrm{global}}(s^*) < E_{\mathrm{deloc}}(s^*) \ .
\end{equation} 
We will refer to the transition from the delocalized state to either of the localized states as delocalized-localized transition, while the transition from one localized state to another is referred to as localized-localized transition.\\
By our assumptions, all the $H_D$ in the class of Hamiltonians Eq. \eqref{eqHdDefinition} that we consider here have the unique ground state
\begin{equation}
    |\psi_0 \rangle = \frac{1}{\sqrt{N}} \sum_z |z \rangle
\end{equation}
with eigenvalue $E_0 = -1$. To obtain $E_{\mathrm{deloc}}(s)$, we reverse the roles of the Hamiltonians and treat $H_D$ as the unperturbed and $H_T$ as the perturbing Hamiltonian. Using first-order non-degenerate perturbation theory we find that
\begin{equation}
\begin{aligned}
    E_{\mathrm{deloc}}(s) &= -(1-s) + s \langle \psi_0 | H_T | \psi_0 \rangle \\
    &= -(1-s) + s \langle E_T \rangle
\end{aligned}
\end{equation}
with
\begin{equation}
    \langle E_T \rangle = \frac{1}{N} \sum_z E_z^T \ .
\end{equation}
At $s=0$, $E_{\mathrm{deloc}}$ is the minimal energy, so the system will be in state $| \psi_0 \rangle$. As $s$ increases, $E_{\mathrm{deloc}}(s)$ will cross either $E_{\mathrm{global}}(s)$ or $E_{\mathrm{local}}(s)$ first. This crossing will demarcate a delocalized-localized transition. If $E_{\mathrm{deloc}}(s)$ crosses $E_{\mathrm{global}}(s)$ first, the ground state will transition directly from the delocalized state to the global minimum (compare Figure \ref{figLevelCrossingPrinciple} (a)). In case $E_{\mathrm{deloc}}(s)$ first crosses $E_{\mathrm{local}}(s)$, the ground state transitions first from the delocalized state to the local minimum. At a later time $s^*$ when $E_{\mathrm{local}}(s)$ crosses $E_{\mathrm{global}}(s)$, there will be an additional localized-localized transition from the local to the global minimum, as depicted in Figure \ref{figLevelCrossingPrinciple} (b).\\
The location of the localized-localized transition $s^*$ is given by the crossing of $E_{\mathrm{global}}(s)$ and $E_{\mathrm{local}}(s)$. Assuming the ground state of $H_T$ to be non-degenerate, $E_{\mathrm{global}}(s)$ can be computed with non-degenerate perturbation theory
\begin{equation}
    E_{\mathrm{global}}(s) = sE_0^T \ .
    \label{eqEglobal}
\end{equation}
The non-degeneracy of the target ground state $E_0^T$ is a simplifying assumption, but our analysis can be easily extended to degenerate target ground states by estimating $E_{\mathrm{global}}(s)$ using the bounds Eq. \eqref{eqEVbound} analogously to the estimate of $E_{\mathrm{local}}(s)$, as we will discuss now.\\
If the degenerate first excited space of $H_T$ is $V$, we can bound $E_{\mathrm{local}}(s)$ using Eq. \eqref{eqEVbound} with $E_V^T = E_1^T$. By solving Eq. \eqref{eqLocLocCondition}, this results in the following bounds on $s^*$
\begin{equation}
    \frac{1-\frac{\phi(V)}{d}}{1-\frac{\phi(V)}{d} + \Delta E^T} \leq s^* \leq \frac{d_{\mathrm{max}}(V)}{d_{\mathrm{max}}(V) + d\Delta E_T}
    \label{eqSstarBounds}
\end{equation}
with the spectral gap of $H_T$
\begin{equation}
    \Delta E_T = E_1^T - E_0^T \ .
\end{equation}
The lower bound depends on the conductance of $V$, while the upper bound depends on the maximum degree. We will refer to these bounds as the conductance and the degree bound respectively. The bounds Eq. \eqref{eqSstarBounds} are the second key result of this work.\\
To predict if there will be a phase transition, we also need to know the location of the delocalized-localized transition to the global minimum $s'$ by solving
\begin{equation}
    E_{\mathrm{deloc}}(s') = E_{\mathrm{global}}(s')
\end{equation}
which renders
\begin{equation}
    s' = \frac{1}{1 + \langle E_T \rangle - E_0^T} \ .
    \label{eqSprime}
\end{equation}
This allows to classify problem instances into three categories:
\begin{enumerate}
    \item The instance has no first-order QPT due to a localized-localized transition (Figure \ref{figLevelCrossingPrinciple} (a)) if
    \begin{equation*}
        s' \geq \frac{d_{\mathrm{max}}(V)}{d_{\mathrm{max}}(V) + d\Delta E_T} \ .
    \end{equation*}
    \item The instance has a first-order QPT due to a localized-localized transition(Figure \ref{figLevelCrossingPrinciple} (b)) if
    \begin{equation*}
        s' \leq \frac{1-\frac{\phi(V)}{d}}{1-\frac{\phi(V)}{d} + \Delta E_T} \ .
    \end{equation*}
    \item If $s'$ is in between the bounds, no statement about first-order QPTs can be made based on the bounds we derived here. We will refer to these problem instances as undecidable.
\end{enumerate}
Note that our first-order analysis allows to estimate the value $s^*$ where a level crossing occurs and provides a qualitative understanding of the conditions leading to first-order QPTs. However, it neglects higher-order interactions of the energy levels that would lift the degeneracy at $s^*$ and lead to an avoided level crossing instead. While these effects are essential for tunneling to occur, they are not required to analyze the conditions when tunneling becomes necessary in the first place. As will be discussed in Section \ref{secNumerics}, our analysis allows to estimate the location of the minimal spectral gap along the annealing path, while the size of the minimal gap can be estimated using higher-order perturbation theory \cite{AminChoi_2009}.
\begin{figure}
    \centering
    \includegraphics[scale=0.42]{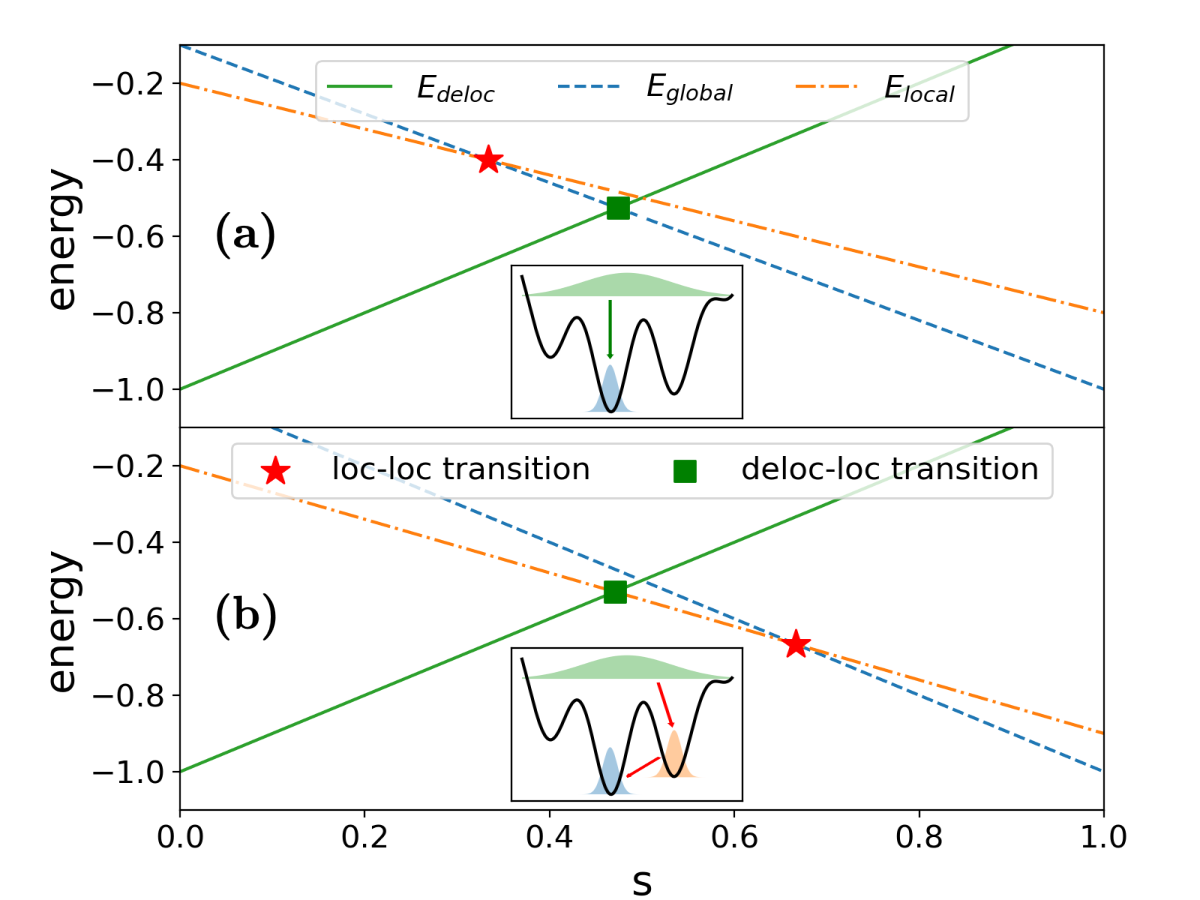}
    \caption{Cartoon of approximate energies with level crossings between the localized states (red stars) and between localized states and the delocalized state (green squares). The localized states correspond to energy $E_{\mathrm{global}}$ (blue dashed line) and $E_{\mathrm{local}}$ (orange dash-dotted line) respectively, while the delocalized state has the energy $E_{\mathrm{deloc}}$ (solid green line). \textbf{(a):} the crossing of the localized states occurs at a time $s$ when the system is still delocalized, hence the ground state will transition to the global minimum directly and only the delocalized-localized transition will be observed. \textbf{(b):} the system transitions from the delocalized state first to the local minimum and subsequently has to tunnel into the global minimum.}
    \label{figLevelCrossingPrinciple}
\end{figure}

\subsection{Correction of the degree bound using graph symmetries}
\label{subsecCorrection}
We will discuss how graph symmetries of $G(V)$ can be used to improve the degree bound on the principal eigenvalue $\lambda_V$. A more detailed discussion of this approach can be found in Appendix \ref{secDetailedSymmetryBound}.\\
If $G(V)$ is an undirected simple graph, then $\lambda_V \leq d_{\mathrm{max}}(V)$, as discussed above. The graph symmetries of $G(V)$ are represented by permutation matrices $\Pi$ \cite{Mowshowitz2009}.
\begin{definition} (Permutation matrix)
    A permutation matrix has in each row and each column one entry 1 and 0 in all other entries. Together with the standard matrix product the permutation matrices form a group $\mathrm{Sym}(V)$.
    \label{defPermutationMatrix}
\end{definition}
Permutation matrices are orthogonal and bijectively map the set of nodes onto itself
\begin{equation}
    \Pi |z\rangle = |z'\rangle \ .
    \label{eqPermutationAction}
\end{equation}
The permutation matrices that commute with the adjacency matrix of a graph form the automorphism group of said graph.
\begin{definition} (Automorphism group)
    Let $G(V) = (V, E)$ a graph. The automorphism group of $G(V)$ is denoted by $\mathcal{S}_V \subseteq \mathrm{Sym}(V)$ and defined as
    \begin{equation}
    \mathcal{S}_V = \{ \Pi \in \mathrm{Sym}(V) : 
    [\Pi, A_{G(V)}] = 0 \} \ .
    \label{eqGraphCommutation}
\end{equation}
\label{defAutomorphismGroup}
\end{definition}
Note that, by definition, the elements of the automorphism group $\mathcal{S}_V$ conserve the neighborhood relations of $G(V)$, as for two nodes $|z_1 \rangle, |z_2\rangle \in V$ and their images $|z'_1 \rangle = \Pi |z_1\rangle$ and $|z'_2\rangle = \Pi |z_2 \rangle$ it holds that
\begin{equation}
    \left( A_{G(V)} \right)_{z_1 z_2} = \left( A_{G(V)} \right)_{z'_1 z'_2} \ .
\end{equation}
Consider
\begin{equation}
    |x\rangle = \sum_z a_z |z\rangle
    \label{eqGeneralEigenvector}
\end{equation}
an eigenvector of $A_{G(V)}$ with eigenvalue $\lambda_V$. Since $A_{G(V)}$ commutes with every permutation matrix $\Pi \in \mathcal{S}_V$, $\Pi |x\rangle$ must also be an eigenvector of $A_{G(V)}$ with eigenvalue $\lambda_V$. However, if the eigenspace of $\lambda_V$ is one-dimensional, $\Pi |x\rangle$ must be proportional to $|x\rangle$, implying that
\begin{equation}
    \Pi |x\rangle = \lambda_\Pi |x\rangle
    \label{eqPermutationEigenvalue}
\end{equation}
for some eigenvalue $\lambda_{\Pi}$ with $|\lambda_\Pi| = 1$. That $|\lambda_\Pi| = 1$ follows, since all $\Pi \in \mathrm{Sym}(V)$ are orthogonal matrices. Hence, any non-degenerate eigenvector of $A_{G(V)}$ must also be an eigenvector of all permutation matrices that commute with $A_{G(V)}$. In order to respect Eq. \eqref{eqPermutationEigenvalue}, the coefficients $a_z$ in Eq. \eqref{eqGeneralEigenvector} of nodes connected by some symmetry of the graph need to have the same amplitude and a fixed phase relation. Assuming $G(V)$ is connected, the principal eigenvalue $\lambda_V$ is non-degenerate and the principal eigenvector can be chosen with all-positive coefficients, according to the Perron-Frobenius theorem. This means that the principal eigenvalue lies in the subspace spanned by
\begin{equation}
    |\xi \rangle = \frac{1}{\sqrt{|\xi|}} \sum_{z \in \xi} |z\rangle
\end{equation}
where the $\xi$ are the sets of nodes that are mapped onto each other by some symmetry. As discussed in Appendix \ref{secDetailedSymmetryBound}, the $\xi$ are equivalence classes of nodes.\\
The adjacency matrix elements in this subspace have the form
\begin{equation}
    \langle \xi| A_{G(V)} | \xi' \rangle = \frac{1}{\sqrt{|\xi||\xi'|}} \sum_{z \in \xi, z' \in \xi' } \left( A_{G(V)} \right)_{z,z'} \ .
\end{equation}
Applying Gershgorin's circle theorem to this matrix, we find
\begin{equation}
    \lambda_V \leq \max_{\xi} \sum_{\xi'} |\langle \xi| A_{G(V)} | \xi' \rangle |
\end{equation}
which evaluates to
\begin{equation}
    \lambda_V \leq \max_{\xi} \sum_{\xi'} \sqrt{ \frac{|\xi|}{|\xi'|}} |E_{\xi \xi'}|
    \label{eqGersh1}
\end{equation}
where $|E_{\xi \xi'}|$ is the number of nodes of equivalence class $\xi'$ in the neighborhood of a node of equivalence class $\xi$.\\
For comparison, in the computational basis $|z\rangle$, Gershgorin renders the bound
\begin{equation}
    \lambda_V \leq \max_{\xi} \sum_{\xi'} |E_{\xi \xi'}| \ .
    \label{eqGersh2}
\end{equation}
The right-hand side of Eq. \eqref{eqGersh1} can be smaller than the right-hand side of Eq. \eqref{eqGersh2}, thus symmetries of the graph can tighten the bound.\\
It is, in fact, reasonable to expect symmetries to move the bounds Eq. \eqref{eqPrincipalEigenvalueBounds} closer to each other, as the conductance bound is based on the uniform superposition of all nodes in $V$ as a variational ansatz (see Appendix \ref{secAppendixDerivationOfBounds}). The uniform superposition naturally is invariant under all permutations of nodes in $V$.\\

\section{Numerical investigation}
\label{secNumerics}
\subsection{Simple toy model}
We will first analyze the developed bounds in an idealized toy model. To this end, we generate random $d$-regular simple graphs of size $|\mathcal{V}| = 256$ and $d=8$. We place the local and global maxima as far apart on the graph as possible, where the distance is measured in terms of traversed edges in the graph, and iteratively grow the local minimum by randomly selecting a node $i \in \mathcal{N}(V)$ and add it to the set $V \leftarrow \{ i \} \cup V$. Here, $\mathcal{N}(V)$ denotes the neighborhood of $V$, i.e. any node in $\mathcal{V} \setminus V$ that shares at least one edge with any node in $V$.\\
As the energy of the global minimum we choose $E_0^T=-1$, while the energy of the local minimum is chosen as $E_V^T = -1 + \Delta E^T$ with $\Delta E^T$ sampled uniformly between 0 and 1. For all but the initial node in the set $V$ the energies are furthermore slightly increased by $\epsilon = 0.01$. This is required to make non-degenerate perturbation theory applicable for comparison, as discussed below. Following this procedure, we generate energy landscapes on random regular graphs with a narrow, non-degenerate global minimum and a wide, arbitrarily shaped and nearly degenerate local minimum far away from the global minimum.\\
\begin{figure}
    \centering
    \includegraphics[scale=0.5]{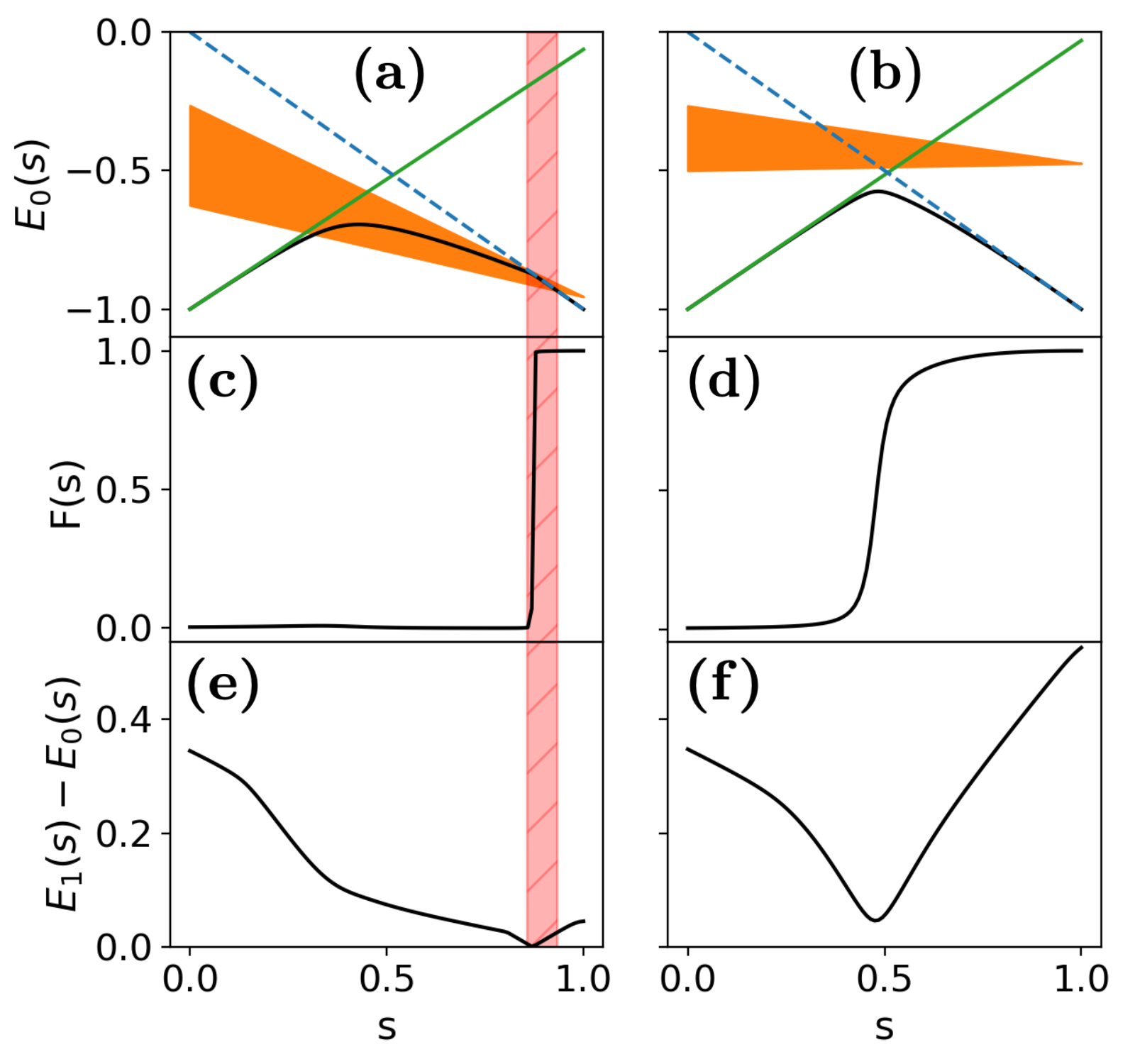}
    \caption{\textbf{(a):} Instantaneous ground state energy $E_0(s)$, \textbf{(c):} solution fidelity $F(s)$ and \textbf{(e):} instantaneous spectral gap $E_1(s) - E_0(s)$ with a first-order QPT. The true ground state energy (black) follows closely the predicted energies of the delocalized state (solid green line), the local minimum (orange shaded area) and the global minimum (dashed blue line). The energy of the local minimum is given as a shaded area according to the bounds Eq. \eqref{eqEVbound}. The solution fidelity is discontinuous within the predicted bounds Eq. \eqref{eqSstarBounds} (red shaded striped interval) coinciding with the minimal spectral gap. \textbf{(b)}, \textbf{(d)} and \textbf{(f)} show the respective quantities for a problem instance without a first-order QPT. Again, the true ground state energy follows the predicted energies, the solution fidelity is smoother, while the spectral gap is larger. All figures with $d=8$ and $|\mathcal{V}|=256$.}
    \label{figNumericalExample}
\end{figure}%
The first-order QPTs can be identified easily by looking at the solution fidelity $F(s)$ along the anneal, which is defined as the overlap between the instantaneous ground state $|\Psi_0(s)\rangle$ and the target ground state $|0\rangle$
\begin{equation}
    F(s) = |\langle \Psi_0(s) | 0 \rangle|^2 \ .
\end{equation}
In Figure \ref{figNumericalExample} we show the ground state energy $E_0(s)$, the solution fidelity $F(s)$ and the spectral gap between instantaneous ground and first excited state $E_1(s) - E_0(s)$ over the annealing of a toy model instance with (a, c, e) and without (b, d, f) first-order QPT according to the classification based on the conductance and degree bounds Eq. \eqref{eqSstarBounds}. Note that we are analyzing the spectral properties of the instantaneous Hamiltonian $H(s)$ as a function of $s$, rather than concrete dynamics of a system.\\
We observe that the true ground state energy follows closely the respective minimum of the perturbed energies of the delocalized state and local and global minima. As the energy of the local minimum is bounded by Eq. \eqref{eqEVbound}, it is depicted as a shaded area (orange). The bounds on the transition point between the local and global minimum from Eq. \eqref{eqSstarBounds} are shown as the shaded, striped interval (red) and bound the location of the abrupt jump in $F(s)$ as well as the location of the minimal spectral gap as seen in Figure \ref{figNumericalExample} (a, c, e). In the case where the bounds predict an absence of a first-order QPT (Figure \ref{figNumericalExample} b, d, f), the true ground state energy is well described by the delocalized energy and the perturbed global minimum. The solution fidelity is smooth and the minimal spectral gap is significantly larger.\\
Following the procedure described above we generate several problem instances and predict the bounds on the localized-localized transition. As first-order QPTs are typically associated with an exponentially closing gap, it is reasonable to assume the location of the minimal spectral gap $s_{\mathrm{min}}$ to coincide with the location of the QPT $s^*$. Therefore, we can test the theory by comparing the predictions of $s^*$ with $s_{\mathrm{min}}$ obtained from exact diagonalization of the instantaneous Hamiltonian $H(s)$.\\
In Figure \ref{figToyModelResults}, we show the predicted bounds over the observed minimal spectral gap from exact diagonalization. The diagonal (red dashed line) denotes the equality of predicted and observed $s_{\mathrm{min}}$. If a point is close to the diagonal, it means that the predicted and the observed $s_{\mathrm{min}}$ are close to each other. The derived bounds Eq. \eqref{eqSstarBounds} reliably cross the diagonal, indicating that the true $s_{\mathrm{min}}$ is within the predicted bounds. We further classify the problem instances according the presence (blue) or absence (red) of first-order QPTs, as well as the undecidable instances (orange), as discussed in section \ref{secTheorySpecTheo}. The problem instances with QPT are well described by the bounds, as well as the ambiguous cases. For the instances without QPT, $s_{\mathrm{min}}$ is not
predicted well. This can be explained as we associate the minimum with the QPT, but the instances do not display a level-crossing between the local and global minimum, as the localized-localized transition would happen at a value of $s < s'$ when the ground state is still delocalized. Interestingly, in particular for the problem instances with a localized-localized transition, the lower bound on $s_{\mathrm{min}}$, i.e. the conductance bound, seems to be much closer to the true observed $s_{\mathrm{min}}$ than the degree bound.\\
The phenomenon of first-order QPTs has been investigated in a previous work by Amin et al. \cite{AminChoi_2009}, where they employ second-order non-degenerate perturbation theory to calculate the location of the level-crossing between local and global minimum. Non-degenerate perturbation theory diverges for degenerate eigenstates, making the predictions less accurate as the local minima become wide. As first-order QPTs are associated with wide local minima \cite{King_2016}, we would expect degenerate perturbation theory to better describe the relevant energy corrections. Furthermore, as we argue here, degenerate perturbation theory leads to an interesting unification with spectral graph theory.\\
For comparison we apply their predictions to the same problem instances as well. As described before, the first state in the set $V$ has its energy set to $1+\Delta E_T$ and the energies of all other states in $V$ are set to $1+\Delta E_T + \epsilon$. Here we use $\epsilon=0.01$. Then we can take the first state in $V$ as the local minimum and compute its energy correction by coupling to its neighbors using second-order non-degenerate perturbation theory. Here it becomes apparent why the additional $\epsilon$ is necessary, since for degenerate neighboring states, the second-order non-degenerate energy corrections would diverge and not render meaningful predictions.\\
The predictions from non-degenerate perturbation theory are also shown in Figure \ref{figToyModelResults} (green crosses). Figure \ref{figToyModelResults} shows that the predictions based on non-degenerate perturbation theory are further away from the diagonal, implying that the predictions are less accurate. Qualitatively, first-order non-degenerate perturbation theory does not predict level crossings for $s \in (0, 1]$, as discussed in \cite{AminChoi_2009}. Our approach shows that the energy corrections leading to first-order QPTs can be described by first-order degenerate perturbation theory, while non-degenerate perturbation theory requires second order corrections, i.e. the relevant energy corrections are a first-order, rather than second order effect.

\begin{figure}
    \centering
    \includegraphics[scale=0.5]{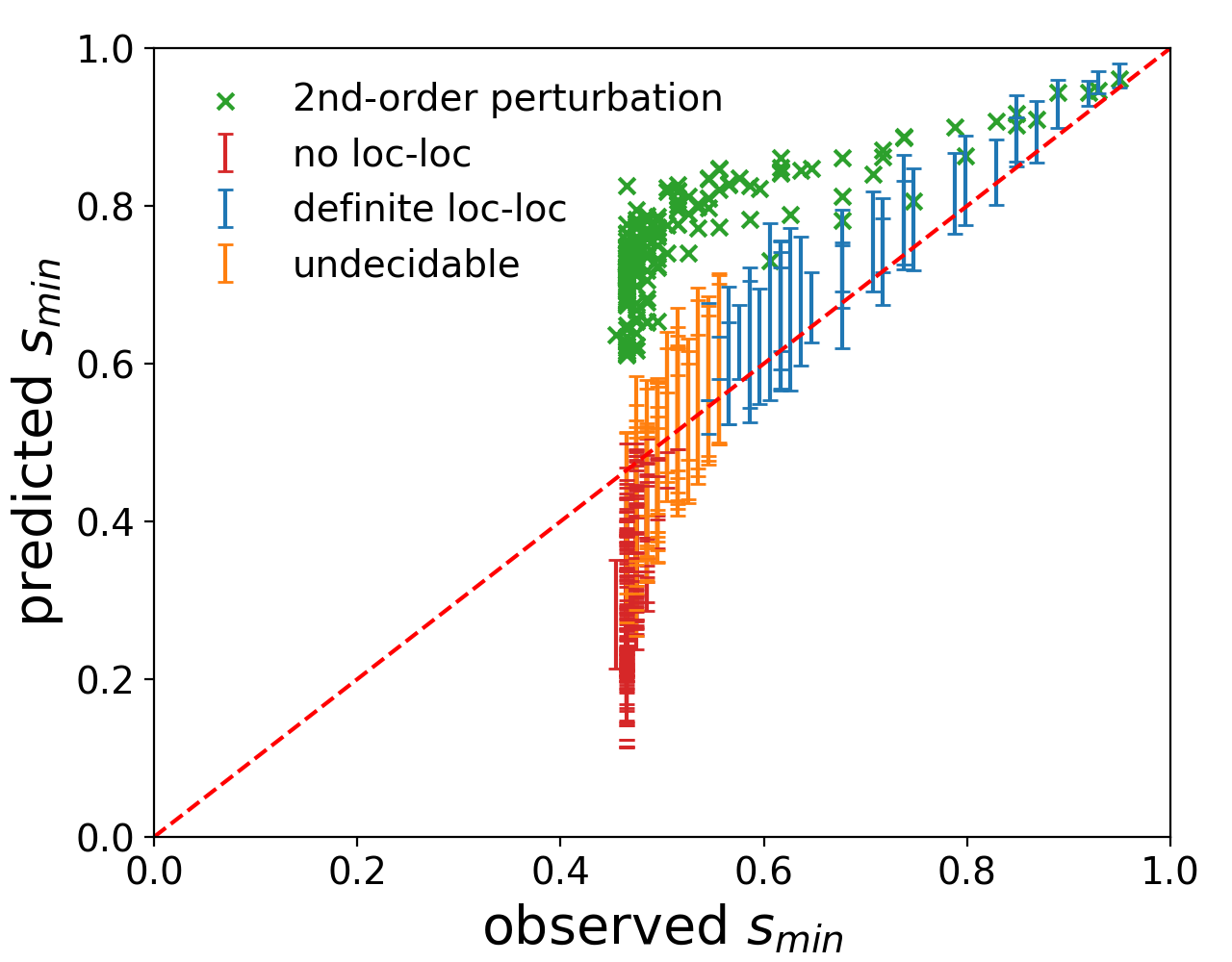}
    \caption{Predicted location of the minimal spectral gap $s_{\mathrm{min}}$ according to Eq. \eqref{eqSstarBounds} over the observed location from exact diagonalization for $d=8$ and $|\mathcal{V}| = 256$. The bounds for instances with a first-order QPT (blue) reliably cross the diagonal (red dashed line), indicating that the true $s_{\mathrm{min}}$ is within the predicted bounds Eq. \eqref{eqSstarBounds}. For instances without a QPT (red) the true value is outside of the bounds, as these instances do not exhibit a localized-localized transition. For comparison the predictions of the location of the QPT according to second-order non-degenerate perturbation theory according to \cite{AminChoi_2009} are shown (green crosses). The predictions from non-degenerate perturbation theory are further away from the observed values, especially if the observed $s_{\mathrm{min}}$ is further away from 1.}
    \label{figToyModelResults}
\end{figure}

\subsection{Weighted Minimum Independent Set}
Lastly, we apply our analysis to a problem instance of an NP-complete problem, namely the Weighted Maximum Independent Set (WMIS) problem. As our analysis requires extensive knowledge of the energy landscape and eigenstates of $H_T$ in order to compute the relevant quantities, it is infeasible to apply it in a more automatized manner. For this reason, as well as for direct comparison with the results of Amin et al., we use the same problem instance as in \cite{AminChoi_2009}.\\
Consider a vertex-weighted graph $G_P = (\mathcal{V}_P, \mathcal{E}_P, w)$, where $w: i \rightarrow w(i)$ is the weight of node $i \in \mathcal{V}_P$. This graph defines a problem instance and the nodes are identified with qubits. This is a strictly distinct type of graph from the graphs used in the theoretical analysis, where the nodes are individual basis states. This distinction is highlighted by the subscript $P$.\\
The WMIS problem is to find the largest subset $S \subseteq \mathcal{V}_P$ such that no two nodes in $S$ share an edge (i.e. it is independent), while simultaneously maximizing the weight
\begin{equation}
    w(S) = \sum_{i \in S} w(i) \ .
\end{equation}
This optimization problem can be cast into the target Ising Hamiltonian
\begin{equation}
    H_T = \sum_{i \in \mathcal{V}_P} h_i \sigma_i^z + \sum_{i, j \in \mathcal{E}_P} J_{ij} \sigma_i^z \sigma_j^z
\end{equation}
with the fields and couplings being
\begin{equation}
\begin{aligned}
    h_i &= \sum_{i,j \in \mathcal{E}_P} J_{ij} - 2w(i) \ , \\
    J_{ij} &> \min\{ w(i), w(j) \} \ .
\end{aligned}
\end{equation}
Amin et al. use two different weights on the nodes, $w(i) = w_G = 1$ for all nodes that partake in the optimal solution and $w(i) = w_L <2w_G$ for all nodes outside the optimal solution. The couplings are chosen as $J_{ij} = 2$. The parameter $w_L$ can be altered to change the depth of the local minima. The problem graph has $N_q = 15$ nodes, each represented by a single qubit. The driver Hamiltonian is the transverse-field driver from Eq. \eqref{eqSigmaXDriver} and its associated graph is the $N_q$-dimensional hypercube. $H_T$ has 27 shallow local minima. These local minima are separated by two bit-flips, allowing for tunneling between the local minima. Therefore, we can consider the 27 local minima plus the shallow potential walls between them as one nearly degenerate local minimum $V$. Given this knowledge of the locations of the local minima, the relevant quantities for $V$ can be counted
\begin{equation}
\begin{aligned}
    |V| &=  135 \ , \\
    |\partial V| &= 1539 \ , \\
    d_{\mathrm{max}}(V) &= 9 \ , \\
    \Delta E_T &= 4(6w_G - 3w_L) \ .
\end{aligned}
\end{equation}
Note that $H_D$ and $H_T$ for the WMIS do not adhere to our assumptions about the normalization of the ground state energies. We adapt the expressions Eq. \eqref{eqSstarBounds} and Eq. \eqref{eqSprime} accordingly by dropping the respective normalization factors in the derivation. For more details on the problem graph $G_P$ we refer to the original publication \cite{AminChoi_2009}.\\
\begin{figure}
    \centering
    \includegraphics[scale=0.42]{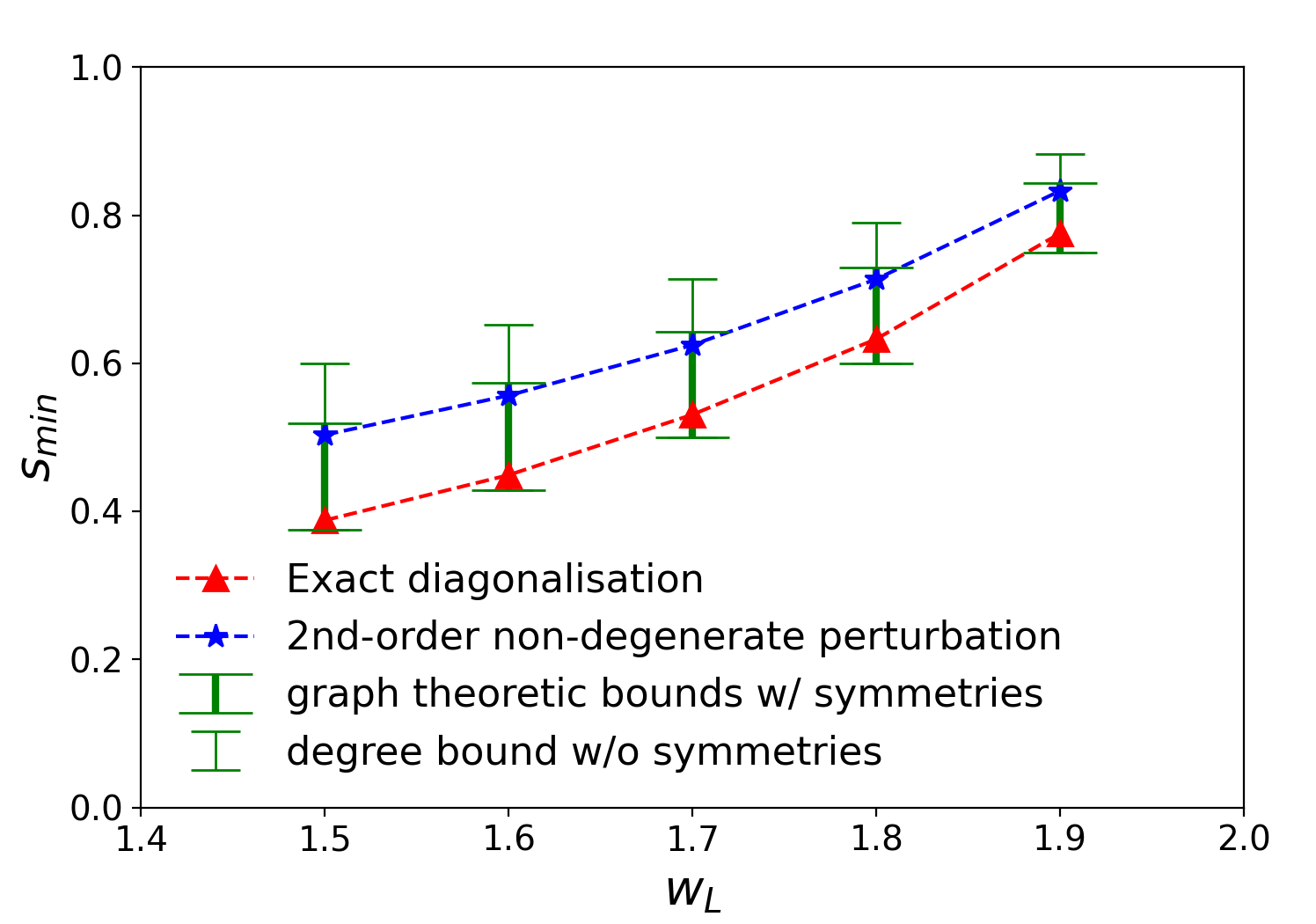}
    \caption{Location of the minimal spectral gap along the annealing path $s_{\mathrm{min}}$ for a WMIS instance \cite{AminChoi_2009} obtained via exact numerical diagonalization (red triangles), by second-order non-degenerate perturbation theory (blue stars) and by the graph theoretic method introduced here (green error bars). The upper bound, i.e. the degree bound, on $s_{\mathrm{min}}$ is shown once without considering graph symmetries of $G(V)$ (narrow caps) and once after applying the corrections using graph symmetries (wide caps). Remarkably, the conductance bound matches very well the exact result. The upper bound is fixed using the bound improved by graph symmetries.}
    \label{figWMISResultsSymmetry}
\end{figure}%
The adapted expressions allow to make predictions of the location of the minimal spectral gap along the annealing path $s_{\mathrm{min}}$. In Figure \ref{figWMISResultsSymmetry} we show the graph theoretic bounds as well as the predictions by second-order non-degenerate perturbation theory by Amin et al. for various depths of the local minimum $w_L$, together with the exact numerical diagonalization. From this comparison, we observe that the exact results are within the bounds that we define and are very close to the lower one, i.e. the conductance bound, as was the case in the toy model in the previous discussion.\\
The degree bound can be tightened by taking the graph symmetries of $G(V)$ into account. We find the local minimum $V$ of the WMIS instance has nodes of three equivalence classes, as depicted by the black circles, red squares and blue triangles in Figure \ref{figLocalMinGraph}.
\begin{figure}
    \centering
    \includegraphics[scale=0.25]{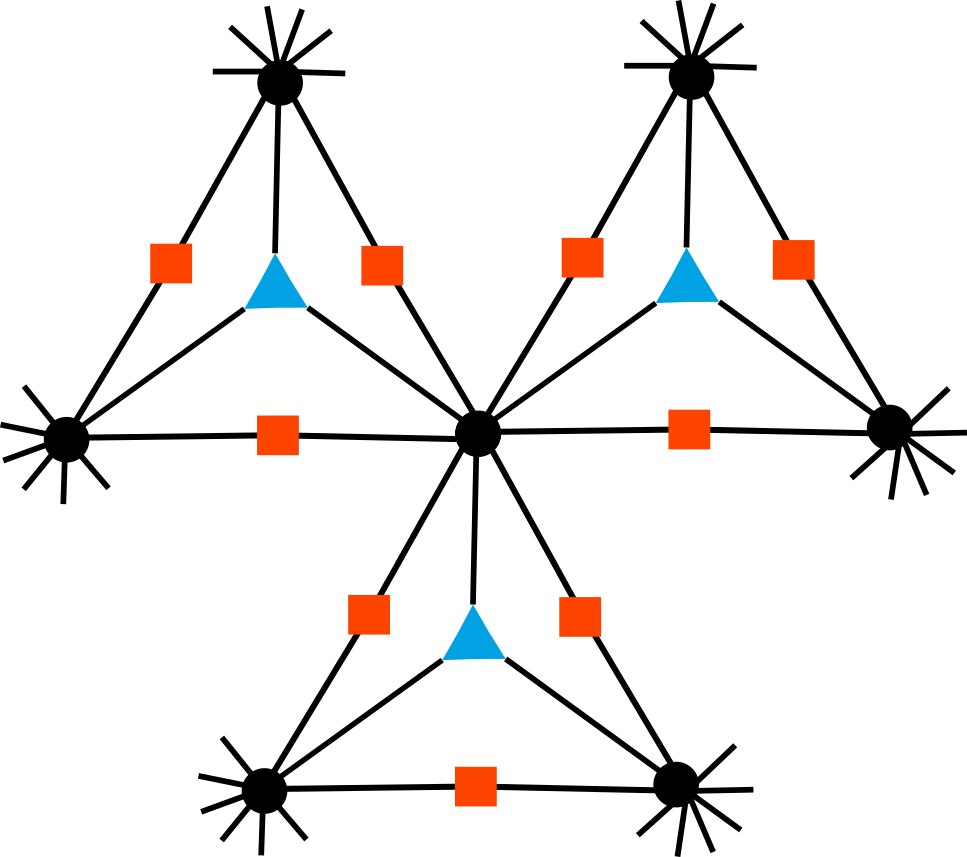}
    \caption{Induced subgraph $G(V)$ of the local minimum of the WMIS instance from \cite{AminChoi_2009}, black dots corresponding to states with one qubit in the $|0\rangle$-state in each of the outer cliques defining the local minimum, red squares representing states with one qubit in the $|0\rangle$-state in two cliques and two qubits in the $|0\rangle$-state in the third, and blue triangles corresponding to states with one qubit in the $|0\rangle$-state in two cliques and no qubit in the $|0\rangle$-state in the third.}
    \label{figLocalMinGraph}
\end{figure}
The black circles are the nodes with the largest degree, hence the previous upper bound $\lambda_V \leq d_{\mathrm{max}} = 9$. There are 27 black circles, 27 blue triangles and 81 red squares. As there are no connections within equivalence classes in $G(V)$, $\langle \xi | A_{G(V)} | \xi \rangle = 0$. Hence, in the basis of the equivalence classes the adjacency matrix reads
\begin{equation}
\begin{aligned}
    &\sum_{\xi, \xi'} \langle \xi | A_{G(V)} | \xi' \rangle |\xi \rangle \langle \xi' | \\ 
    &= \begin{pmatrix} 
    0 & 2\sqrt{3} & 3 \\
    2\sqrt{3} & 0 & 0 \\
    3 & 0 & 0
    \end{pmatrix} \ .
\end{aligned}
\end{equation}
Applying Gershgorin's circle theorem results in
\begin{equation}
    \lambda_V \leq 2\sqrt{3} + 3 = 6.46... \ .
\end{equation}
Using this improved estimate of the principal eigenvalue of $G(V)$, we get improved estimates of the location of the minimal spectral gap, as shown in Figure \ref{figWMISResultsSymmetry}.

\section{Discussion}
\label{secDiscussion}
\subsection{Tightness of the bounds and Cheeger inequalities}
It is possible to obtain an intuition about the tightness of the bounds on $E_V(s)$ and thus about the bounds on $s^*$.\\
As we find for the average degree $\langle d(V) \rangle$ of nodes in $G(V)$ (see Appendix \ref{secAppendixDerivationOfBounds})
\begin{equation}
    \langle d(V) \rangle = d - \phi(V)
\end{equation}
we can conclude for regular induced subgraphs $G(V)$ that the bounds Eq. \eqref{eqPrincipalEigenvalueBounds} turn into equalities
\begin{equation}
    d-\phi(V) = d_{\mathrm{max}}(V) = \lambda_V \ .
\end{equation}
As a consequence, the bounds Eq. \eqref{eqEVbound} on $E_V(s)$ are equal and therefore exact within first-order perturbation.\\
However, $G(V)$ does not exactly need to be regular for the conductance bound Eq. \eqref{eqEVupperbound} to become tight. It can be shown that the principal eigenvalue of large Erdös-Rényi graphs $G(n,p)$ is approximately $np$ \cite{Krivelevich_2001}, i.e. the average degree $\langle d \rangle$. Therefore, if $G(V)$ can be considered a large, sparse random graph, the conductance bound is probably tight.\\
Employing further results from spectral graph theory, the conductance $\phi$ admits an interesting connection to the spectral gap of $H_D$. Let us define a non-trivial lower bound to the conductance of an undirected graph $G= (\mathcal{V}, \mathcal{E} )$
\begin{equation}
    \phi_0 = \min_{\substack{U \subset \mathcal{V} \\ U \neq \emptyset \\ |U| \leq N/2}} \phi(U) \ .
\end{equation}
In words, $\phi_0$ is the smallest conductance over all non-empty subsets of $\mathcal{V}$ containing at most half of all nodes. This quantity is called the Cheeger constant \cite{Chung_1997}, also known as the conductance of the graph $G$. The Cheeger constant can be linked to the spectral gap of a $d$-regular graph's adjacency matrix $A_G$ via the Cheeger inequalities \cite{Chung_1997}. Since the $H_D$ considered here are proportional to a $d$-regular graph's $A_G$, the inequalities are easily adapted to give
\begin{equation}
    \frac{1}{2}\frac{\phi_0^2}{d^2} \leq \Delta E_D \leq 2\frac{\phi_0}{d}
\label{eqCheegerInequalities}
\end{equation}
with $\Delta E_D$ the spectral gap of $H_D$.\\
Assuming that the low-energy sub-spaces of $H_T$ under perturbation with $H_D$ for some problem class fulfill the conditions discussed above and the corrected energy eigenvalues are close to the conductance bound, we can determine the location of the localized-localized transition to be close to the conductance bound
\begin{equation}
    s^* \approx \frac{1 - \frac{\phi(V)}{d}}{1 - \frac{\phi(V)}{d} + \Delta E_T} \ .
    \label{eqAssumption}
\end{equation}
Under these conditions, $s^*$ increases monotonously as $\phi(V)$ decreases, allowing us to claim
\begin{equation}
    s^* \leq \frac{1 - \frac{\phi_0}{d}}{1 - \frac{\phi_0}{d} + \Delta E_T} \ .
\end{equation}
This gives a condition for the absence of first-order QPT under the stated assumptions by using Eq. \eqref{eqSprime} and setting $s^* < s'$ to get
\begin{equation}
    \frac{E_1^T - \langle E_T \rangle}{E_0^T - \langle E_T \rangle} < \frac{\phi_0}{d} \ .
    \label{eqNoQPTCondition1}
\end{equation}
Using the Cheeger inequalities Eq. \eqref{eqCheegerInequalities} the condition Eq. \eqref{eqNoQPTCondition1} can be stated in terms of the spectral gap of $H_D$
\begin{equation}
    \frac{E_1^T - \langle E_T \rangle}{E_0^T - \langle E_T \rangle} < \frac{\Delta E_D}{2} \ .
    \label{eqNoQPTCondition2}
\end{equation}
Given assumption Eq. \eqref{eqAssumption} this will render a sufficient condition for the absence of first-order QPTs for any distribution of the energies $E_z^T$ over the nodes of $G$, i.e. the target eigenstates $|z\rangle$, that is consistent with the above assumption.\\
The condition Eq. \eqref{eqNoQPTCondition2} implies that the energy landscape of the problem encoded in $H_T$ needs to be sufficiently flat, except for a pronounced ground state energy. As an example, consider an $H_T$ with $E_0^T = E_1^T - \Delta E_T = -1$ and $\langle E_T \rangle = 0$. Then Eq. \eqref{eqNoQPTCondition2} can be rearranged to read
\begin{equation}
    1 - \frac{\Delta E_D}{2} < \Delta E_T \ .
\end{equation}
For a fixed spectral gap $\Delta E_D$ this implies a lower bound on the spectral gap of $H_T$ and since by assumption we restricted $E_0^T = -1$ and $\langle E_T \rangle = 0$, the spectrum of $H_T$ has to concentrate close to $0$.

\subsection{Interpretation of the bounds}

Let us now present a more physical interpretation of the derived bounds. $d_{\mathrm{max}}(V)$ provides a notion of the maximum number of degrees of freedom involved in the local minimum, e.g. in the case of a transverse field $H_D$ as in Eq. \eqref{eqSigmaXDriver}, $d_{\mathrm{max}}(V)$ describes the number of floppy qubits involved in the minimum. Floppy qubits are qubits that do not significantly impact the energy of the total system, whether they are in the $|0\rangle$- or $|1\rangle$-state. Therefore they cause wide local minima and are known to contribute to perturbative anti-crossings \cite{Dickson_2011a, King_2016}. However, the notion of width of a local minimum in higher dimensions becomes an increasingly poorly defined concept. Our lower bound using the conductance provides a notion of width, as the conductance can be thought of as some version of surface-to-volume ratio of the local minimum, where the volume corresponds to the number of nodes in the local minimum, while the surface corresponds to the edges leaving the local minimum.\\
In general, calculating $\phi(V)$ and $d_{\mathrm{max}}(V)$ of a given problem instance is not going to be scalable, as it requires knowledge of the set of nodes $V$ that constitute a local minimum. Therefore, it is unlikely that the bounds will find direct application in an algorithm to overcome the limitations of AQC. However, we believe that our analysis provides a valuable framework to understand the conditions that lead to the occurrence of QPTs, which in turn can aid the development of tools in their mitigation. Furthermore, there are potential applications of our results in the complexity analysis for AQC.

\section{Conclusion}
First-order QPTs are known to lead to exponential runtime in AQC algorithms, hampering the chances of getting any quantum advantage in the adiabatic computation. Understanding the causation of such phenomena is therefore key in the design of potential strategies to allow for its mitigation. In this work we examine the conditions linked to the occurrence of first-order QPTs caused by localization in AQC. We explicitly show how the use of degenerate perturbation theory enables the application of tools from graph theory in order to analyze this phenomenon in depth. Crucially, this formalism allows us to derive bounds and conditions for the occurrence of QPTs as well as its position along the annealing path. 
We show how such inequalities are linked to two properties of the subgraph containing the local minimum: its maximum degree $d_{\mathrm{max}}(V)$ and its conductance $\phi(V)$.
We numerically test the accuracy of these bounds with a toy-model problem as well as a real optimization problem (WMIS) and find that the lower bound seems to be closer to the exact solution obtained through direct diagonalization of the Hamiltonian. Based on this observation, we provide two scenarios when we can expect the conductance bound to be exact up to first order perturbation theory: namely when the subgraph induced by the degenerate subspace $V$, $G(V)$, is regular or can be assumed to be a large, sparse Erdös-Rényi graph.\\
Additionally we show how knowledge of the symmetries of the induced subgraph $G(V)$ can be used to improve the upper bound on the principal eigenvalue, and further work is required to understand whether this contributes to a tighter conductance bound. After establishing the basis for this formalism, future work will be focused on its application to the study of catalysts in the Hamiltonian as a strategy to enlarge the minimum energy gap and thus improve the performance of AQC algorithms \cite{Albash_2019, Crosson_2020, Choi_2021, Feinstein_2022}. Furthermore, our results may open the door to the construction of gap oracles, which can be used to obtain quadratic speed-up in adiabatic quantum optimization \cite{Jarret2018_2}. We believe that the use of our graph-based approach is not only convenient in the study of QPTs in the context of AQC, but also of a wide range of many-body phenomena present in analog-based quantum computation. 




\section*{ACKNOWLEDGMENTS}
We thank A. Palacios de Luis and J. Riu for valuable discussions. This work was supported by European Commission FET-Open project AVaQus (GA 899561) and the Agencia de Gesti\'o d’Ajuts Universitaris i de Recerca through the DI grant (No. DI75).
\bibliographystyle{unsrt}
\bibliography{references}

\appendix
\section{Derivation of graph bounds}
\label{secAppendixDerivationOfBounds}
For simple undirected graphs the upper bound follows directly from Gershgorin circle theorem, which we will state here for completeness.
\begin{theorem} (Gershgorin Circle Theorem \cite{Gershgorin31})
Let $M$ a complex square matrix with elements $m_{ij}$. All eigenvalues $\lambda$ of $M$ lie in the union of the disks $D(m_{ii}, R_i)$ centered at $m_{ii}$ and with radii
\begin{equation*}
    R_i = \sum_{j\neq i} |m_{ij}| \ .
\end{equation*}
\end{theorem}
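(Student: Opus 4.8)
The plan is to argue directly from the eigenvalue equation, without any machinery beyond the triangle inequality. Let $\lambda$ be an arbitrary eigenvalue of $M$ and let $x = (x_1, \dots, x_n)^T \neq 0$ be an associated eigenvector. The first step is to select the index $i$ at which $|x_i|$ is maximal; because $x$ is nonzero, this maximum is strictly positive, and that single observation is what makes the whole argument go through.

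Next I would write out the $i$-th component of $Mx = \lambda x$, namely $\sum_j m_{ij} x_j = \lambda x_i$, and move the diagonal contribution to the left, obtaining $(\lambda - m_{ii})\,x_i = \sum_{j \neq i} m_{ij} x_j$. Taking moduli of both sides and applying the triangle inequality gives $|\lambda - m_{ii}|\,|x_i| \leq \sum_{j \neq i} |m_{ij}|\,|x_j|$.

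Then I would invoke the maximality of $|x_i|$ to replace each $|x_j|$ on the right by $|x_i|$, so that the right-hand side is bounded by $\left( \sum_{j \neq i} |m_{ij}| \right) |x_i| = R_i\,|x_i|$. Dividing through by $|x_i| > 0$ yields $|\lambda - m_{ii}| \leq R_i$, i.e. $\lambda \in D(m_{ii}, R_i)$. Since the index $i$ was produced from $\lambda$ (via the chosen eigenvector), every eigenvalue lies in $\bigcup_i D(m_{ii}, R_i)$, which is the claim. One may optionally note that applying the same reasoning to $M^T$ gives the analogous column-sum version, and intersecting the two families of disks tightens the localisation.

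The proof has no real obstacle: the only point that requires a moment's care is the division by $|x_i|$, which is precisely why $i$ must be fixed as an argmax of a nonzero eigenvector rather than an arbitrary coordinate; everything else is the triangle inequality applied componentwise.
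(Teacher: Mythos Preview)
Your argument is correct and is in fact the standard textbook proof of Gershgorin's theorem. However, the paper does not supply its own proof of this statement: it merely states the theorem ``for completeness'' with a citation to Gershgorin's original 1931 paper and then uses the consequence $|\lambda| \le \max_i \sum_j |m_{ij}|$ as a black box. There is therefore nothing to compare your proposal against; you have simply filled in a proof the authors chose to omit, and you have done so cleanly.
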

From this it follows directly that
\begin{equation}
    |\lambda| \leq \max_i \sum_j |m_{ij}| \ .
    \label{eqGersh}
\end{equation}
Applying Eq. \eqref{eqGersh} to the adjacency matrix $A_G$ of a simple graph $G$ we find for all eigenvalues $\lambda$ of $A_G$
\begin{equation}
    \lambda \leq d_{\mathrm{max}}
\end{equation}
since the $A_G$ of simple graphs have all zeros everywhere except for $(A_G)_{ij} = 1$ in the off-diagonal entries corresponding to connected nodes.\\
The lower bound is usually stated in terms of the average degree and can be shown using Rayleigh quotient for a Hermitian matrix $M$ with maximal eigenvalue $\lambda$
\begin{equation}
    R(M, x) := \frac{x^T M x}{x^Tx} \leq \lambda \ .
\end{equation}
Choosing $M = A_G$ and $x = [1, 1, ...]^T$ shows the lower bound and hence we find
\begin{equation}
    \langle d \rangle \leq \lambda \leq d_{\mathrm{max}}
\end{equation}
where $\langle d \rangle$ is the average degree of all nodes $d_{\mathrm{max}}$ the maximum degree in $\mathcal{V}$. These bounds are a standard result from spectral graph theory.\\
As discussed in the main text, we have to solve the eigenvalue problem on the (nearly) degenerate subspace $V$, which is to say we have to find the principal eigenvalue of the induced subgraph $G(V)$. The maximum degree of all nodes in $G(V)$ is $d_{\mathrm{max}}(V)$, hence the upper bound follows directly. A little more care needs to be taken to link the average degree to the conductance.\\
\begin{lemma}
Let $G = (\mathcal{V}, \mathcal{E})$ a $d$-regular simple graph. Let $G(V) = (V, E) \subseteq G$ the subgraph induced by $V \subseteq \mathcal{V}$ with adjacency matrix $A_{G(V)} \in \{0, 1 \}^{|V| \times |V|}$ and principal eigenvalue $\lambda_V$. Let $\phi(V)$ the conductance of $V$. Then 
\begin{equation*}
    d - \phi(V) \leq \lambda_V \ .
\end{equation*}
\label{lemmaLowerBoundPrincipalEigenvalue}
\end{lemma}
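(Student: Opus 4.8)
The plan is to lower-bound the principal eigenvalue $\lambda_V$ by the Rayleigh quotient of $A_{G(V)}$ evaluated at the all-ones vector, and then to turn the resulting quantity --- the average degree of $G(V)$ --- into $d-\phi(V)$ via a double-counting identity. Since $G(V)$ is a simple undirected graph, $A_{G(V)}$ is real symmetric, so its principal (largest) eigenvalue dominates the Rayleigh quotient $R(A_{G(V)},x) = \frac{x^T A_{G(V)} x}{x^T x}$ for every nonzero $x$, exactly as stated for Hermitian matrices earlier in this appendix. Evaluating at $x = (1,\dots,1)^T$ of length $|V|$ gives $x^T x = |V|$ and $x^T A_{G(V)} x = \sum_{i,j\in V}(A_{G(V)})_{ij} = \sum_{i\in V} d_{G(V)}(i) = 2|E|$, where $d_{G(V)}(i)$ denotes the degree of node $i$ in $G(V)$ and $E$ its edge set, so $\lambda_V \geq \frac{2|E|}{|V|} = \langle d(V)\rangle$, the average degree in $G(V)$.

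The next step is to establish $d\,|V| = 2|E| + |\partial V|$. This is a double count of the edge-endpoints of $G$ lying in $V$: because $G$ is $d$-regular, the left-hand side counts these with multiplicity; each such endpoint belongs either to an edge internal to $V$, and internal edges contribute $2|E|$ since both their endpoints lie in $V$, or to a boundary edge in $\partial V$, and since $G$ is simple each boundary edge has exactly one endpoint in $V$ and is counted exactly once, contributing $|\partial V|$. Dividing by $|V|$ yields $\langle d(V)\rangle = d - \frac{|\partial V|}{|V|} = d - \phi(V)$; chaining this with the Rayleigh bound gives $\lambda_V \geq d - \phi(V)$. As a byproduct this also records the identity $\langle d(V)\rangle = d - \phi(V)$ used in the discussion section.

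I expect no substantial obstacle here. The only points demanding a little care are confirming the direction of the inequality --- that a fixed test vector furnishes a \emph{lower} bound on the \emph{largest} eigenvalue, which is immediate from the variational characterisation of the top eigenvalue of a symmetric matrix (equivalently, $\lambda_V$ is the Perron--Frobenius eigenvalue of the nonnegative matrix $A_{G(V)}$) --- and the bookkeeping in $d\,|V| = 2|E| + |\partial V|$, in particular not double-counting boundary edges, which is precisely where simplicity of $G$ (no multi-edges) is used. One should also note in passing that the all-ones vector is the natural choice because it is the uniform superposition over $V$, which, as remarked in Section~\ref{subsecCorrection}, is invariant under all permutations of nodes in $V$ and hence is the variational ansatz underlying the conductance bound.
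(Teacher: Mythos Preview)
Your proof is correct and follows essentially the same approach as the paper: bound $\lambda_V$ below by the Rayleigh quotient at the all-ones vector to obtain $\lambda_V \geq 2|E|/|V|$, then use the edge count $2|E| = d|V| - |\partial V|$ together with the definition of $\phi(V)$ to conclude. The extra remarks on Perron--Frobenius and on the all-ones vector as the natural variational ansatz are fine commentary but not needed for the argument.
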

\begin{proof}
Consider the Rayleigh quotient of $A_{G(V)}$
\begin{equation}
    R(A_{G(V)}, x) = \frac{x^T A_{G(V)}x}{x^T x} \leq \lambda_V
    \label{eqSubgraphRayleigh1}
\end{equation}
for any $x \in \mathbb{R}^{|V|}$. Choosing $x = [1, 1, ...]^T \in \mathbb{R}^{|V|}$ we find
\begin{equation}
    R(A_{G(V)}, x) = \frac{1}{|V|} \sum_{i,j} (A_{G(V)})_{ij} = 2\frac{|E|}{|V|}
    \label{eqSubgraphRayleigh2}
\end{equation}
since $G$ simple implies $G(V)$ simple. Since $G$ is $d$-regular, the number of edges $|E|$ of $G(V)$ can be counted using the edge boundary \eqref{eqEdgeBoundaryDefinition}
\begin{equation}
    2|E| =  d|V| - |\partial V| \ .
    \label{eqSubgraphEdges}
\end{equation}
Combining \eqref{eqSubgraphRayleigh1}, \eqref{eqSubgraphRayleigh2} and \eqref{eqSubgraphEdges} with the definition of the conductance \eqref{eqConductanceDefinition} proves the statement.
\end{proof}

\section{Detailed derivation of correction of the degree bounds using graph symmetries}
\label{secDetailedSymmetryBound}
In order to put the correction of the degree bound in section \ref{subsecCorrection} of the main text onto a solid footing, we here provide a more detailed explanation. Note that while here we discuss the impact of graph symmetries of the induced sub-graph $G(V)$, our analysis in valid for general simple graphs $G$ and their automorphism groups.\\
In addition to Definition \ref{defPermutationMatrix} and Definition \ref{defAutomorphismGroup} we will make use of the concept of equivalence classes of nodes to formalize the notion of two nodes $z, z' \in V$ being connected by a graph symmetry $\Pi \in \mathcal{S}_V$.
\begin{definition} (Relation $\sim$)
    Let $V$ the set of nodes of a graph $G(V)$. The relation $\sim$ between two nodes $z, z' \in V$ is defined as
    \begin{equation}
        z \sim z' \Longleftrightarrow \exists \Pi \in \mathcal{S}_V : \Pi|z\rangle = |z'\rangle \ .
    \end{equation}
    \label{defEquivalenceRelation}
\end{definition}
It can easily verified that $\sim$ is an equivalence relation, i.e. that it is reflexive, symmetric and transitive, since $\mathcal{S}_V$ has a group structure. We find that:
\begin{enumerate}
    \item $\mathbf{1} \in \mathcal{S}_V \Rightarrow$ $\sim$ reflexive
    \item $\Pi^{-1} \in \mathcal{S}_V$ for all $\Pi \in \mathcal{S}_V \Rightarrow$ $\sim$ symmetric
    \item $\Pi_1 \Pi_2 \in \mathcal{S}_V$ for all $\Pi_1, \Pi_2 \in \mathcal{S}_V \Rightarrow$ $\sim$ transitive
\end{enumerate}
As $\sim$ is an equivalence relation, the set of nodes $V$ can be divided into equivalence classes.
\begin{definition} (Equivalence class)
    Given a graph $G(V)$ and the equivalence relation $\sim$, the nodes $z \in V$ are divided into equivalence classes $[ \xi ]$ defined as
    \begin{equation}
    [\xi] := \{ z \in V: z \sim \xi \in V \}
\end{equation}
where $\xi \in V$ is a representative of the equivalence class $[\xi]$.
\label{defEquivalenceClass}
\end{definition}
In the following we will omit the square brackets for clarity and denote the equivalence classes by the Greek letter $\xi$ only.\\
While the vanishing commutator Eq. \eqref{eqGraphCommutation} implies that $A_{G(V)}$ and any $\Pi \in \mathcal{S}_V$ have a common eigenbasis, for $A_{G(V)}$ with degenerate eigenvalues this property is not transitive, i.e. for $\Pi_1, \Pi_2 \in \mathcal{S}_V$ we find $[\Pi_1, A_{G(V)}] = 0$ and $[\Pi_2, A_{G(V)}] = 0$, but that does not necessarily imply $[\Pi_1, \Pi_2] = 0$.\\
However, any eigenvector of $A_{G(V)}$ corresponding to a non-degenerate eigenvalue must also be an eigenvector for all $\Pi \in \mathcal{S}_V$, as mentionend in the main text. We can use this to estimate the principal eigenvalue if $G(V)$ is connected, since then the principal eigenvalue is non-degenerate, according to the Perron-Frobenius theorem.\\
The permutation matrices are orthogonal and, as such, all their eigenvalues are on the complex unit circle
\begin{equation}
    \Pi |x\rangle = \lambda_\Pi |x\rangle \text{  with  } |\lambda_\Pi| = 1
    \label{eqPermutationEigenvalueEquation1}
\end{equation}
for an eigenvector $|x\rangle = \sum_z a_z |z\rangle$. Recall that the permutaion matrices bijectively map the set of node onto itself such that
\begin{equation}
    \Pi |z \rangle = |z'\rangle \ .
\end{equation}
This gives a condition for the coefficients of $|x\rangle$, since
\begin{equation}
    \Pi |x \rangle = \sum_z a_z |z'\rangle = \sum_z \lambda_\Pi a_z |z\rangle\label{eqPermutationEigenvalueEquation2}
\end{equation}
where the last equality follows from Eq. \eqref{eqPermutationEigenvalueEquation1}. As $\Pi$ is a bijection, the sums in Eq. \eqref{eqPermutationEigenvalueEquation2} both run over all $z$. Element-wise equality in Eq. \eqref{eqPermutationEigenvalueEquation2} requires that the coefficients of the eigenstate $|x\rangle$ have identical moduli and fixed phase relations as $\lambda_\Pi$ is on the complex unit circle, i.e.
\begin{equation}
\begin{aligned}
    |a_{z'}| &= |a_z| \ , \\
    \arg(a_{z}) - &\arg(a_{z'}) \equiv \arg(\lambda_\Pi) \mod 2\pi \ .
\end{aligned}
\label{eqAmplitudePhaseRelation}
\end{equation}
This is true for all $\Pi \in \mathcal{S}_V$, hence if $|x\rangle$ is a mutual eigenvector of all $\Pi \in \mathcal{S}_V$, all $a_z$ of nodes within the same equivalence class $\xi$ have the same modulus and fixed phases, up to a global phase factor.\\
Each permutation defines one or more orbits by concatenating the application of the permutation matrix and each orbit has a period $p$ \cite{GarcaPlanas2015}. The period $p$ implies for each $z$ on that orbit that
\begin{equation}
    \Pi^{p} |z\rangle = |z\rangle \ .
\end{equation}
Note that, by definition of $\sim$, all nodes on the same orbit are in the same equivalence class $\xi$.\\
The $\arg (a_z)$ of nodes on the same orbit are distributed equidistantly between 0 and $2\pi$ \cite{GarcaPlanas2015}, unless the eigenvalue $\lambda_\Pi = 1$. It is therefore impossible to write an eigenvector of $\Pi$ with real, positive coefficients $a_z$ up to a global phase factor, if $\lambda_\Pi \neq 1$. But since the principal eigenvector of a connected $G(V)$ can be given with real, positive coefficients (up to a global phase), the principal eigenvector must be an eigenvector of all $\Pi \in \mathcal{S}_V$ with eigenvalue $\lambda_\Pi = 1$. This implies that the principal eigenvector $|x\rangle$ of $G(V)$ is of the form
\begin{equation}
    |x\rangle = \sum_\xi x_\xi |\xi \rangle
\end{equation}
with
\begin{equation}
    |\xi\rangle = \frac{1}{\sqrt{|\xi|}} \sum_{z \in \xi} |z \rangle \ .
\end{equation}
Evaluating the matrix elements
\begin{equation}
    \langle \xi | A_{G(V)} | \xi'\rangle = \frac{1}{\sqrt{|\xi||\xi'|}} \sum_{z \in \xi, z' \in \xi' } (A_{G(V)})_{z,z'}
    \label{eqSymmetricMatrixElement}
\end{equation}
transforms the graph into a smaller weighted graph, potentially with loops. As graph automorphisms by definition maintain neighborhood relations between nodes, each node of equivalence class $\xi$ has in its neighborhood the same number of nodes of equivalence class $\xi'$. Therefore, fixing a $z \in \xi$ and counting its neighbors in the support of $|\xi'\rangle$ results in
\begin{equation}
    \sum_{z' \in \xi'} (A_{G(V)})_{z, z'} = |E_{\xi, \xi'}|
\end{equation}
where $|E_{\xi \xi'}|$ is the number of nodes of equivalence class $\xi'$ in the neighborhood of a node of equivalence class $\xi$. This evaluates the first sum in Eq. \eqref{eqSymmetricMatrixElement}. Evaluating the second sum renders
\begin{equation}
    \langle \xi | A_{G(V)} | \xi'\rangle = \sqrt{ \frac{|\xi|}{|\xi'|}} |E_{\xi \xi'}| =\sqrt{ \frac{|\xi'|}{|\xi|}} |E_{\xi' \xi}|
\end{equation}
where the last equality follows, since either summation in Eq. \eqref{eqSymmetricMatrixElement} can be evaluated first.\\
Applying Gershgorin's circle theorem Eq. \eqref{eqGersh} to this matrix, we find
\begin{equation}
    \lambda_V \leq \max_{\xi} \sum_{\xi'} |\langle \xi | A_{G(V)} | \xi' \rangle |
\end{equation}
which gives
\begin{equation}
    \lambda_V \leq \max_{\xi} \sum_{\xi'} \sqrt{ \frac{|\xi|}{|\xi'|}} |E_{\xi \xi'}|
\label{eqGershgorinSymmetric}
\end{equation}
as discussed in the main text.
For comparison, in the computational basis the Gershgorin bound Eq. \eqref{eqGersh} gives
\begin{equation}
    \lambda_V \leq \max_z \sum_{z'} (A_{G(V)})_{zz'}
\end{equation}
where the sum over $z'$ counts the number of nodes of each equivalence class $\xi'$ in the neighborhood of $z$. Since all nodes $z \in \xi$ have the same number of nodes in their neighborhood, the maximum over $z$ can be replaced by a maximum over $\xi$, while the sum over $z'$ can be replaced by a sum over adjacent equivalence classes, in which case 
\begin{equation}
    \lambda_V \leq \max_{\xi} \sum_{\xi'} |E_{\xi \xi'}|
\label{eqGershgorin}
\end{equation}
as stated in the main text.

\end{document}